\newenvironment{proof}{\paragraph{Proof:}}{\hfill$\square$}
\newtheorem{theorem}{Theorem}[section]
\numberwithin{theorem}{section}
\newtheorem{problem}[theorem]{Problem}
\newcommand{\F}{\mathbb{F}}
\newcommand{\I}{\mathcal{I}}
\newcommand{\M}{\mathcal{M}}
\newcommand{\N}{\mathbb{N}}
\DeclarePairedDelimiter\floor{\lfloor}{\rfloor}
\begin{document}
	\pagestyle{empty}
	
	\title{On Maximal Families of Binary Polynomials \\ with Pairwise Linear Common Factors}
	\date{}
	\author[*]{Maximilien Gadouleau}
	\author[**]{Luca Mariot}
	\author[**]{Federico Mazzone}
	\affil[*]{\small Department of Computer Science, Durham University, Durham, United Kingdom}
	\affil[**]{\small Semantics, Cybersecurity and Services group, University of Twente, Enschede, The Netherlands}
	
	\maketitle
	\begin{abstract}
		We consider the construction of maximal families of polynomials over the finite field $\F_q$, all having the same degree $n$ and a nonzero constant term, where the degree of the GCD of any two polynomials is $d$ with $1 \le d\le n$. The motivation for  this problem lies in a recent construction for subspace codes based on cellular automata. More precisely, the minimum distance of such subspace codes relates to the maximum degree $d$ of the pairwise GCD in this family of polynomials. Hence, characterizing the maximal families of such polynomials is equivalent to determining the maximum cardinality of the corresponding subspace codes for a given minimum distance. We first show a lower bound on the cardinality of such families, and then focus on the specific case where $d=1$. There, we characterize the maximal families of polynomials over the binary field $\F_2$. Our findings prompt several more open questions, which we plan to address in an extended version of this work.
	\end{abstract}
	
	\thispagestyle{empty}
	
	\section{Background and Problem Statement}
	\label{sec:background}
	In what follows, we denote by $\F_q$ the finite field of order $q$, with $q$ being a power of a prime number, while $\F_q^n$ represents the $n$-dimensional vector space over $\F_q$. Further, $\F_q[x]$ denotes the the ring of polynomials with coefficients in $\F_q$ in the unknown $x$. 
	
	For all $n \in \N$, we define $S_n$ as follows:
	$$S_n := \{ f \in \F_q[x] : \deg(f) = n, f \,\text{monic}, f(0) \ne 0 \} \enspace .$$
	In other words, $S_n$ is the family of all monic polynomials in $\F_q[x]$ of degree $n$ and with a nonzero constant term. For any $k$, Let $\I_k \subseteq S_k$ be the set of irreducible polynomials of degree $k$, and let $I_k := |\I_k|$, which can be computed through \emph{Gauss's formula}~\cite{gauss-irr}.
	
	Furthermore, given $d \in \{0, \dots, n\}$, let us define $\M_n^d \subseteq \mathcal{P}(S_n)$ as:
	$$\M_n^d := \{ R \subseteq S_n : \forall f \ne g \in R, \deg(\gcd(f,g)) \le d \} \enspace .$$
	Thus, $\M_n^d$ is a family of subsets of polynomials in $S_n$, such that the degree of the GCD of any two distinct polynomials in it is at most $d$.
	
	We are interested in the following problem:
	\begin{problem}
		\label{pb:state}
		Let $n \in \N$ and $d \in \{0,\cdots, n\}$. What is the size of the largest subset in $\M_n^d$?
	\end{problem}
	
	The motivation for studying Problem~\ref{pb:state} stems from the construction of \emph{subspace codes} in the domain of \emph{network coding}~\cite{medard2011}. Subspace codes are essentially families of vector subspaces of $\F_q^n$. The distance between any two subspaces is defined as the sum of their dimensions, minus twice the dimension of their intersection~\cite{koetter08}. A general research question is then to investigate the trade-off between the cardinality of a subspace code $\mathcal{C}$ and its minimum distance $d_{min}$, i.e. the minimum distance between any two subspaces belonging to $\mathcal{C}$.
	
	Recently, the second and the third author of this abstract proposed in~\cite{mm-automata2023} a new construction of subspace codes based on linear cellular automata (CA), which can be seen as a specific kind of linear mappings $F: \F_q^n \to \F_q^m$ with \emph{shift-invariant} output coordinates, that can also be defined by polynomials in $\F_q[x]$. An interesting finding of~\cite{mm-automata2023} is that the minimum distance of these subspace codes is related to the maximum degree of the GCD of any two polynomials defining the corresponding families of CA; in particular, the higher the degree, the smaller is the minimum distance of the resulting code.
	
	Problem~\ref{pb:state} has already been solved in~\cite{mm-automata2023} for the case $d=0$, by leveraging the construction of pairwise coprime polynomials exhibited in~\cite{mariot20}. In fact, this construction results in the \emph{partial spread code} that the authors of~\cite{gadouleau23} used to define a new family of bent functions.
	
	The contributions of this extended abstract	are as follows:
	\begin{itemize}
		\item We first show a lower bound on the cardinality of the maximal families in $\M_n^d$.
		\item Then, we build a maximal family for the specific case of $q=2$ and $d=1$.
	\end{itemize}
	Remark that the case $q=2$ corresponds to the construction of subspace codes with binary linear CA, which can be seen as a specific kind of (linear) vectorial Boolean functions. This represents also an interesting case from the point of view of the practical applications to network coding.
	
	
	\section{Lower Bound in the General Case}
	\label{sec:lower-bound}
	
	We first start by proving a lower bound on the cardinality of the maximal families in $\M_n^d$.
	Consider the family of polynomials in $S_n$ built as follows.
	Note that we work under the implicit assumption that $d < n / 2$.
	The following construction can be easily adapted if this assumption does not hold.
	
	\noindent \, \textsc{Construction-Lower-Bound}$(n, d)$
	\begin{enumerate}
		\item Take all irreducible polynomials of degree $n$, namely all the elements in $\I_n$.
		\item For all $i \in \{ 1, \dots, d\}$, for all $h \in \I_{n-i}$, pick a $g \in \I_i$ and take $gh$.
		\item For all $i \in \{ d+1, \dots, \floor{(n-1)/2}\}$, for all $g \in \I_i$, pick a $h \in \I_{n - i}$ not previously used and take $gh$.
		\item If $n$ is even, for all $g \in \I_{n/2}$, take $g^2$.
		\item For all $i \in \{ 1, \dots, d\}$, for all $g \in \I_i$, pick a $h \in \I_{n - \floor{n/i} i}$ and take $g^{\floor{n/i}}h$.
	\end{enumerate}
	
	In steps 1, 3, and 4, we proceed as in the coprime case studied in~\cite{mariot20}.
	In particular, we take all irreducible polynomials as they always fit the condition to be in a family of $\M_n^d$.
	Then, we combine each irreducible of degree $d + 1$ with a distinct irreducible of degree $n - d - 1$, and so on also for degrees $d + 2, \dots, \floor{(n-1)/2}$.
	It is always possible to find these distinct polynomials as $I_i$ is monotonically non-decreasing in $i$ (see Lemma 3 of~\cite{mariot20}), and thus $I_{n - i} \ge I_i$ for $i \le n / 2$.
	However, in this case we are allowed to have a common factor of degree at most $d$.
	Hence, when considering combinations of the form $gh$ with $g \in \I_i$ and $h \in \I_{n-i}$ for $i \le d$ (in step 2), we can pick the elements in $\I_i$ multiple times, as their degree is at most $d$.
	This allows us to index these combinations over the irreducibles in $\I_{n-i}$, which are more in number than the ones in $\I_i$.
	Moreover, we can get all the smooth combinations of small factors of degrees $\le d$.
	To make sure to avoid big common factors, in step 5 we use powers of individual irreducibles of degree $\le d$, combined with a suitable $h$ of degree $< d$.
	
	Hence, we have shown that the set generated by the \textsc{Construction-Lower-Bound}$(n, d)$ is indeed a member of the family $\M_n^d$.
	The cardinality of such set is given by
	\begin{equation*}
		\sum_{i = 1}^{\floor{n/2}}{I_i} + \sum_{i = n-d}^{n-1}{I_i} + I_n \enspace .
	\end{equation*}
	The first sum is due to steps 3, 4, and 5, the second sum is due to step 2, while $I_n$ is due to step 1.
	We have thus obtained a lower bound for the cardinality of the maximal family in $\M_n^d$.
	In the next section, we show that this family is maximal at least in a specific case.
	
	
	\section{Maximal Family for Linear Common Factor}
	
	We now show that the construction proposed in the previous section is maximal when considering it on the binary field $\F_2$, for the specific case of $d = 1$.
	This means allowing for a common factor of degree at most 1, namely $1$ or $x + 1$.
	Let us first adapt the construction for this case.
	
	\vspace{12pt}
	
	\noindent \, \textsc{Construction-Maximal}$(n)$
	\begin{enumerate}
		\item Take all irreducible polynomials of degree $n$, namely all the elements in $\I_n$.
		\item For all $g \in \I_{n-1}$, take $(x + 1) g$.
		\item For all $i \in \{ 2, \dots, \floor{(n-1)/2}\}$, for all $g \in \I_i$, pick a $h \in \I_{n - i}$ not previously used and take $gh$.
		\item If $n$ is even, for all $g \in \I_{n/2}$, take $g^2$.
		\item Take $(x + 1)^n$.
	\end{enumerate}
	
	\vspace{8pt}
	
	As shown in Section \ref{sec:lower-bound}, the set produced by \textsc{Construction-Maximal}$(n)$ belongs to the family $\M_n^1$.
	Moreover, it has cardinality
	$$\sum_{i = 1}^{\floor{n/2}}{I_i} + I_{n - 1} + I_n \enspace .$$
	We show that this set is also maximal as follows.
	
	\begin{theorem}
		In $\F_2$, a maximal element of $\M_n^1$ has cardinality
		$$N_n := \sum_{i = 1}^{\floor{n/2}}{I_i} + I_{n - 1} + I_n \enspace .$$
	\end{theorem}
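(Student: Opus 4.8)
Since \textsc{Construction-Maximal}$(n)$ already exhibits a member of $\M_n^1$ of cardinality $N_n$, the lower bound is in hand, and the task reduces to the upper bound $|R| \le N_n$ for every $R \in \M_n^1$. I will assume $n \ge 3$, as required by the standing assumption $d = 1 < n/2$; the remaining cases are immediate. Over $\F_2$ the only monic degree-one polynomial with nonzero constant term is $x+1$, so for distinct $f, g \in R$ the condition $\deg \gcd(f,g) \le 1$ means exactly that $\gcd(f,g) \in \{1, x+1\}$. My plan is to construct an injection $\phi$ from $R$ into the set of irreducibles
$$\mathcal{J} := \I_1 \cup \I_2 \cup \cdots \cup \I_{\floor{n/2}} \cup \I_{n-1} \cup \I_n \enspace .$$
For $n \ge 3$ one has $\floor{n/2} \le n-2$, so the sets $\I_i$ appearing in this union are pairwise disjoint and $|\mathcal{J}| = \sum_{i=1}^{\floor{n/2}} I_i + I_{n-1} + I_n = N_n$; the injection will then yield $|R| \le N_n$.

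The argument rests on two elementary observations about any $R \in \M_n^1$. First, every irreducible $p$ with $\deg p \ge 2$ divides at most one element of $R$: if $p \mid f$ and $p \mid g$ with $f \ne g$, then $p \mid \gcd(f,g)$ and $\deg \gcd(f,g) \ge 2$, a contradiction. Second, at most one element of $R$ is divisible by $(x+1)^2$, by the same reasoning applied to the degree-two polynomial $(x+1)^2$. The first observation is the engine of the proof: any rule sending $f$ to one of its irreducible factors of degree $\ge 2$ is automatically injective, since two polynomials sharing such a factor must coincide.

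To define $\phi$, I factor each $f \in R$ as $f = (x+1)^a m$ with $m$ coprime to $x+1$ (so $m$ is the product of the irreducible factors of degree $\ge 2$), and set $B := \{f \in R : (x+1)^2 \mid f\}$, which has $|B| \le 1$ by the second observation. To the unique element of $B$ (if any) I assign $\phi(f) = x+1 \in \I_1$. For every $f \in R \setminus B$ I have $a \le 1$, hence $\deg m = n - a \in \{n-1, n\}$, and I claim $m$ has an irreducible factor of degree in $\{2, \dots, \floor{n/2}\} \cup \{n-1, n\}$: if $m$ is irreducible its degree is $n-1$ or $n$; otherwise $m$ is a product of at least two irreducibles of degree $\ge 2$ whose degrees sum to $n-a$, so its factor of least degree has degree at most $\floor{(n-a)/2} \le \floor{n/2}$. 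I let $\phi(f)$ be such a factor, which lies in $\mathcal{J}$.

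Finally I verify that $\phi \colon R \to \mathcal{J}$ is injective. The images of $R \setminus B$ are irreducibles of degree $\ge 2$, pairwise distinct by the first observation and distinct from $x+1$; as $|B| \le 1$, the value $x+1$ is attained at most once. Hence $|R| \le |\mathcal{J}| = N_n$, which together with the construction establishes the claimed equality. The one genuine obstacle is the factor $x+1$: being of degree one it may be shared by arbitrarily many members of $R$, and so cannot serve as a private label in general. The heart of the argument is therefore the point that a polynomial can fail to possess a degree-$\ge 2$ factor inside $\mathcal{J}$ only if it is divisible by $(x+1)^2$ (its degree-$\ge 2$ factors, if any, then reducing to a single \emph{medium}-degree irreducible in $\{\floor{n/2}+1,\dots,n-2\}$); the second observation caps the number of such polynomials at one, and $x+1$ absorbs exactly that one.
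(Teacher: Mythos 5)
Your proof is correct and takes essentially the same route as the paper's: both build an injection from the family into the set of irreducibles of degrees $\{1,\dots,\floor{n/2}\}\cup\{n-1,n\}$, resting on the same two observations — an irreducible factor of degree at least $2$ is ``private'' to a single member, and $(x+1)^2$ divides at most one member, which is the unique polynomial sent to $x+1$. The only cosmetic difference is that the paper first invokes maximality to split off $\I_n$ and injects the remainder $B = A \setminus \I_n$ via the lowest-degree-factor map $L$, whereas your $\phi$ treats degree-$n$ irreducibles uniformly and thus bounds every $R \in \M_n^1$, not just maximal ones; on the common domain the two maps and their injectivity arguments coincide.
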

	\begin{proof}
		Let $A \in \M_n^1$ be a maximal element. We know by construction that there exists an element in $\M_n^1$ with cardinality $N_n$, meaning that $|A| \ge N_n$. Clearly, $A$ must contain all irreducible polynomials of degree $n$, $\I_n$. Let $B := A \setminus \I_n$. Given a polynomial $f \in \F_q[x]$, let $l(f)$ be its irreducible factor of lowest degree (if more than one are present, pick the first one in lexicographical order). For any $f \in B$, we define the function
		$$L(f) :=
		\begin{cases*}
			l(f) & if $x + 1 \nmid f$ \\
			l(f / (x + 1)) & if $x + 1 \mid f$
		\end{cases*}
		$$
		First, we prove that the image of $L$ is a subset of $\bigcup_{i=1}^{\floor{n/2}}{\I_i} \cup \I_{n-1}$.
		In the first case of the definition of $L$, since $f$ is reducible we have that $\deg(l(f)) \le \floor{n/2}$ (also $\deg(l(f)) \ge 2$). In the second case, let $g$ such that $f = (x + 1) g$, then we have the following two cases:
		\begin{itemize}
			\item if $x + 1 \mid g$, then $l(f / (x + 1)) = l(g) = x + 1$;
			\item if $x + 1 \nmid g$, then we consider the following cases:
			\begin{itemize}
				\item if $g$ is irreducible, then $l(g) = g$, thus $\deg(l(g)) = n - 1$;
				\item if $g$ is reducible, then $\deg(l(g)) \le \floor{(n-1)/2}$ (also $\deg(l(g)) \ge 2$).
			\end{itemize}
		\end{itemize}
		Thus the image of $L$ is a subset of $\bigcup_{i=1}^{\floor{n/2}}{\I_i} \cup \I_{n-1}$.
		
		Now, we prove that $L : B \to \bigcup_{i=1}^{\floor{n/2}}{\I_i} \cup \I_{n-1}$ is injective. Let $f_1 \ne f_2 \in B$ such that $L(f_1) = L(f_2)$.
		\begin{itemize}
			\item If $\deg(L(f_1)) > 1$, then $f_1$ and $f_2$ have a common factor of degree higher than $1$, leading to contradiction.
			\item If $\deg(L(f_1)) = 1$, then $L(f_1) = x + 1$, which can only be possible if we are in the second case of the definition of $L$, namely $x + 1 \mid f_1$. But then $f_1 = (x + 1)^2 g_1$ for some $g_1$ and the same would hold for $f_2$. Thus, $f_1$ and $f_2$ would share at least a quadratic factor $(x + 1)^2$, leading to contradiction.
		\end{itemize}
		Since $L$ is injective, we have that $|B| \le \sum_{i = 1}^{\floor{n/2}}{I_i} + I_{n - 1}$ and $|A| \le \sum_{i = 1}^{\floor{n/2}}{I_i} + I_{n - 1} + I_n = N_n$, hence $|A| = N_n$.
	\end{proof}
	
	\paragraph{Characterization of maximal families}
	Note that in the proof above we also got that $L : B \to \bigcup_{i=1}^{\floor{n/2}}{\I_i} \cup \I_{n-1}$ is a bijection. We can use this function to characterize all maximal families in $\M_n^1$ as follows:
	\begin{itemize}
		\item Obviously $\I_n$ is part of any maximal family.
		\item Assuming $n > 2$. If $n$ is even and $g \in \I_{n/2}$, then $L^{-1}(g) = gh$ for some $h \in \I_{n/2}$. If $h \ne g$ then $L^{-1}(h) \ne gh = L^{-1}(g)$, and $\gcd(L^{-1}(h), L^{-1}(g)) = h$ and its degree is greater than $1$, leading to contradiction. Thus $g = h$.
		\item For $i \in \{1, \dots, \floor{(n-1)/2}\}$, if $g \in \I_i$, then we have four cases:
		\begin{itemize}
			\item (only applicable if $i \mid n$) $L^{-1}(g)=g^a$ with $a = n / i$
			\item (only applicable if $i \mid n - 1$) $L^{-1}(g)=(x+1) g^a$ with $a = (n - 1) / i$
			\item $L^{-1}(g)=g^a h$ for some $a \ge 1$, $ai < n$, $g \nmid h$, $\deg(h) > 1$, and $x + 1 \nmid h$. Note that $h$ must be irreducible and $ai < n/2$, in particular $h \in \I_{n - ai}$.
			\begin{proof}
				If $h$ is reducible or if $ai < n / 2$, then $\deg(l(h)) \le \floor{n/2}$, thus we can apply $L^{-1}$ to $l(h)$. Then $l(h)$ would be a common divisor of $L^{-1}(g)$ and $L^{-1}(l(h))$. Note that $l(h) \ne g$ since $g \nmid h$ and thus $\gcd(g, h) = 1$, hence $L^{-1}(g) \ne L^{-1}(l(h))$. Note that $\deg(l(h)) > 1$ since $x + 1 \nmid h$. Thus we have a contradiction.
			\end{proof}
			
			Also note that $h$ does not divide any other $f' \in A$. Otherwise we would have a common factor of degree strictly greater than $1$.
			\item (with $\deg(g) = i \ge 2$) $L^{-1}(g)=(x + 1) g^a h$ for some $a \ge 1$, $ai < n$, $x + 1 \nmid g$, $g \nmid h$, $\deg(h) > 1$, and $x + 1 \nmid h$. Note that $h$ must be irreducible and $ai - 1 < n / 2$, in particular $h \in \I_{n - ai - 1}$.
			\begin{proof}
				Similar to the previous one.
			\end{proof}
		\end{itemize}
	\end{itemize}
	Hence, we have proven that a set $A \in \M_n^1$ is maximal if and only if:
	\begin{itemize}
		\item $A$ contains $\I_n$;
		\item if $n$ is even then $A$ contains $\{g^2: g \in \I_{n/2}\}$;
		\item for every $g \in \I_i$ with $1 < i < n / 2$ there exists a unique $f \in A$ such that $g \mid f$, and this $f$ is either of the form
		\begin{itemize}
			\item $f = g^a$ with $a = n / i$, or
			\item $f = (x + 1) g^a$ with $a = (n - 1) / i$, or
			\item $f = g^a h$ with $ai < n / 2$ and $h \in \I_{n - ai}$, or
			\item $f = (x + 1) g^a h$ with $ai < n / 2 + 1$ and $h \in \I_{n - ai - 1}$;
		\end{itemize}
		and in the last two cases $h$ does not divide any other $f' \in A$;
		\item either $(x + 1)^n \in A$ or $(x + 1)^a h \in A$ for $1 < a < n / 2$ and some $h \in \I_{n - a}$.
	\end{itemize}

	\section{Conclusions and Future Works}
	\label{sec:open}
	In this abstract, we considered the problem of characterizing the maximal families of polynomials of degree $n$ and nonzero constant term over $\F_q$ where the degree of the GCD of any two polynomials is at most $d$. The motivation for studying this problem originates from a recent construction of subspace codes based on linear cellular automata. After providing a general lower bound, we focused our attention on the specific case where $q=2$ and $d=1$, i.e. each pair of polynomials in the family has a linear common factor. We proved the formula for the cardinality of a maximal element of $\M_n^1$, and then gave a characterization of the corresponding maximal families.
	
	Clearly, Problem~\ref{pb:state} is still far from being completely solved, and there are several avenues for future research on this subject. We plan to address the following questions in an extended version of this abstract:
	\begin{itemize}
		\item The most natural direction to explore is to generalize our counting results to a larger degree $d$ of the pairwise GCD and to larger finite fields. Clearly, since the higher is $d$ the lower becomes the minimum distance of the resulting subspace code, it makes sense to consider only small values of $d$ for practical applications.
		\item It would be interesting to investigate more closely the trade-off between the cardinality of the maximal families and the degree of the GCD. This will give useful information for the design of subspace codes based on linear CA depending on the application's requirements.
		\item Finally, after determining their size, an interesting research question is to count the \emph{number} of maximal families, by varying the degree of the GCD and the order of the ground field.
	\end{itemize}
	
	\bibliographystyle{abbrv}
	\bibliography{references}
	
\end{document}